\documentclass[11pt]{article}
\usepackage{ambecomen}
\usepackage{graphicx}
\usepackage{amsmath}
\usepackage{amsmath,scalefnt}
\usepackage{footnote}
\usepackage{fmtcount}
\usepackage{color}

\usepackage{threeparttable}
\usepackage{chemarr}

\newcommand{\norm}[1]{\left\Vert #1 \right\Vert}
\newcommand{\normpQ}[1]{\norm{#1}_{p,Q}}

\setlength{\parskip}{5pt plus 1pt minus 1pt}
\setlength{\parindent}{0em}  
\setlength{\evensidemargin}{0.5in}
\setlength{\oddsidemargin}{0.5in}
\setlength{\textwidth}{13.5cm}
\setlength{\textheight}{22.5cm}
\setlength{\topmargin}{-1cm}
\setlength{\headheight}{1cm}
\setlength{\footskip}{1cm}
\setlength{\headsep}{0.5cm}

\title{Remarks on contractions of reaction-diffusion PDE's\\
on weighted $L^2$ norms}
\author{{Zahra Aminzare}\\ \\
{\small Department of Mathematics, Rutgers University,}\\
{\small Piscataway, NJ 08854-8019 USA}}

\begin{document}
\maketitle

\section{Introduction}
The study of reaction-diffusion partial differential equations (PDE's) of the
form
\begin{equation} \label{re-di}
\displaystyle\frac{\partial u}{\partial t}(\omega, t)=F(u(\omega, t), t)+D\Delta u(\omega, t),
\end{equation}
where $\Delta$ denotes a diffusion operator, is central to many biological
applications, in fields ranging from pattern formation in development to
ecology.  One of the central topics of research in this context is the
question of how the stability of solutions of the PDE relates to stability of
solutions of the underlying ordinary differential equation (ODE)
$\frac{d x}{d t}(t)=F(x(t), t)$.
This paper shows that when solutions of this ODE have a certain contraction
property, namely $\mu_{2,P}(J_F(u, t))<0$ uniformly on $u$ and $t$, where
$\mu_{2,P}$ is a logarithmic norm (matrix measure) associated to a $P$-weighted
$L^2$ norm, the associated PDE, subject to no-flux (Neumann) boundary
conditions, enjoys a similar property, if $P^2D+DP^2>0$.  This result complements a similar
result shown in \cite{aminzare-sontag} which, while allowing norms $L^p$
with $p$ not necessarily equal to $2$, had the restriction that it only
applied to diagonal matrices $P$.  Here, $P$ is allowed to be an arbitrary
positive definite symmetric matrix.  The paper also discusses an example of
biological interest, as well as examples that illustrate when the results in
\cite{aminzare-sontag} apply but the current result does not.

%PDE
\section{Main Result}
In this section, we study the reaction diffusion PDE (\ref{re-di}), subject to a Neumann boundary condition:
\begin{equation} \label{i-c}
\nabla u_i\cdot\mathbf{n}(\xi,t)=0\quad \forall\xi\in\partial\Omega,\;\;\forall t\in[0,\infty).
\end{equation} 

\begin{assumption}\label{as-pde}
In $(\ref{re-di})-(\ref{i-c})$ we assume:
\begin{itemize}
\item $\Omega$ is a bounded domain in $\r^m$ with smooth boundary $\partial\Omega$ and outward normal $\mathbf{n}$. 
\item $F\colon V\times [0, \infty)\to\r^n$ is a (globally) Lipschitz and twice continuously differentiable vector field with respect to $x$, and continuous with repect to $t$, with components $F_i$:
$$F(x, t)=(F_1(x, t), \cdots, F_n(x, t))^T$$ for some functions $F_i\colon V\times [0, \infty)\to\r$, where $V$ is a convex subset of $\r^n$.
\item $D=\diag(d_1, \cdots, d_n)$, with $d_i>0$, is called the diffusion matrix. 
\end{itemize}
\end{assumption}

\begin{definition}
By a solution of the PDE
\begin{equation*}
\displaystyle\frac{\partial u}{\partial t}(\omega, t)=F(u(\omega, t), t)+D\Delta u(\omega, t),
\end{equation*}
\begin{equation*} 
\nabla u_i\cdot\mathbf{n}(\xi,t)=0\quad \forall\xi\in\partial\Omega,\;\;\forall t\in[0,\infty).
\end{equation*}
on an interval $[0, T)$, where $0<T\leq\infty$, we mean a function $u=(u_1, \cdots, u_n)^T$, with $u\colon \displaystyle\bar\Omega\times [0,T)\to V$, such that:
\begin{enumerate}
 \item for each $\omega\in\bar\Omega$, $u(\omega, \cdot)$ is continuously differentiable;
 \item for each $t\in[0,T)$, $u(\cdot, t)$ is in $\mathbf{Y}$, where
 \[
\mathbf{Y}=\displaystyle \left\{v\colon\bar\Omega\to V\mid\;v=(v_1,\cdots, v_n),\quad v_i\in C^2_{\r}\left(\bar\Omega\right),\quad\frac{\partial v_i}{\partial\mathbf{n}}(\xi)=0,\; \forall\xi\in\partial\Omega\;\;\forall i\right\},
\]
where $C^2_{\r}\left(\bar\Omega\right)$ is the set of twice continuously differentiable functions $\bar\Omega\to \r$.
\item for each $\omega\in\bar\Omega$, and each $t\in [0,T)$, $u$ satisfies the above PDE.
\end{enumerate}
\end{definition}

Theorems on existence and uniqueness for PDE's such as $(\ref{re-di})-(\ref{i-c})$ can be found in standard references, e.g. \cite{Smith, Cantrell}.

For any invertible matrix $Q$, and any $1\leq p\leq\infty$, and continuous $u\colon\Omega\to\r^n$, we also denote $\|u\|_{p,Q}=\|Qu\|_p$, where $(Qu)(\omega)=Qu(\omega)$ and $\|\cdot\|_p$ now indicates the norm in $L^p(\Omega, \r^n)$. 

\begin{definition}
Let $(X,\|\cdot\|_X)$ be a finite dimensional normed vector space over $\r$ or $\cp$. The space $\mathcal{L}(X, X)$ of linear transformations $A\colon X \to X$ is also a normed vector space with the induced operator norm $$\|A\|_{X\to X}=\displaystyle\sup_{\|x\|_X=1}\|Ax\|_X.$$ The logarithmic norm $\mu_X(\cdot)$ induced by $\|\cdot\|_X$ is defined as the directional derivative of the matrix norm, that is,
\[
\mu_X(A)=\displaystyle\lim_{h\to 0^+}\frac{1}{h}\left(\|I+hA\|_{X\to X}-1\right),
\]
where $I$ is the identity operator on $X$. 
\end{definition}

In \cite{aminzare-sontag}, we proved the following lemma:
\begin{lemma}\label{contracting}
Consider the PDE system $(\ref{re-di})-(\ref{i-c})$, and suppose Assumption \ref{as-pde} holds. For some $1\leq p\leq\infty$, and a positive diagonal matrix $Q$, let
\[\mu\;:=\;\displaystyle\sup_{(x,t)\in V\times[0, \infty)}\mu_{p,Q}(J_F(x,t)).\] 
Then for any two solutions $u$ and $v$ of $(\ref{re-di})-(\ref{i-c})$, 
 \[\|u(\cdot, t)-v(\cdot, t)\|_{p,Q}\leq e^{\mu t}\|u(\cdot, 0)-v(\cdot, 0)\|_{p,Q}.\]
\end{lemma}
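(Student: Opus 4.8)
The plan is to reduce the PDE contraction statement to a pointwise (in $\omega$) estimate governed by the finite-dimensional logarithmic norm $\mu_{p,Q}$, after first disposing of the diffusion term by integration by parts. Set $w := u - v$. Subtracting the two copies of $(\ref{re-di})$ and using the fundamental theorem of calculus in the form
\[
F(u,t)-F(v,t)=\Big(\int_0^1 J_F\big(v+s\,w,\,t\big)\,ds\Big)\,w \;=:\; A(\omega,t)\,w ,
\]
(valid since $V$ is convex and $F$ is $C^1$ in $x$), I obtain the linear-in-$w$ evolution $\partial_t w = A(\omega,t)\,w + D\Delta w$ with $\partial w_i/\partial\mathbf{n}=0$ on $\partial\Omega$. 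Because $\mu_{p,Q}$ is subadditive and positively homogeneous, hence convex on matrices, Jensen's inequality applied to the averaged Jacobian gives $\mu_{p,Q}(A(\omega,t))\le \int_0^1 \mu_{p,Q}(J_F(v+sw,t))\,ds\le\mu$ for every $\omega$ and $t$.

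Next I would differentiate $V(t):=\|w(\cdot,t)\|_{p,Q}$. It is cleanest to work with $V(t)^p=\int_\Omega \sum_i q_i^p\,|w_i(\omega,t)|^p\,d\omega$ (writing $Q=\diag(q_1,\dots,q_n)$) for $1<p<\infty$, treating $p=1,\infty$ separately. Differentiating under the integral sign,
\[
\frac{d}{dt}V^p
= p\int_\Omega \sum_i q_i^p\,|w_i|^{p-2}w_i\,\big(A w\big)_i\,d\omega
\;+\; p\int_\Omega \sum_i q_i^p\,|w_i|^{p-2}w_i\,d_i\,\Delta w_i\,d\omega ,
\]
so the estimate splits into a reaction term and a diffusion term.

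For the diffusion term I integrate by parts on each component: Green's identity turns $\int_\Omega |w_i|^{p-2}w_i\,\Delta w_i\,d\omega$ into a boundary integral that vanishes by the Neumann condition $\partial w_i/\partial\mathbf{n}=0$, plus the bulk term $-(p-1)\int_\Omega |w_i|^{p-2}|\nabla w_i|^2\,d\omega$, which is $\le 0$ because $d_i>0$ and $p>1$; hence the whole diffusion contribution is nonpositive. For the reaction term, the substitution $y:=Qw$ and $B(\omega,t):=QA(\omega,t)Q^{-1}$ (legitimate since $Q$ is diagonal and positive) converts the weighted sum into $\sum_i |y_i|^{p-2}y_i\,(B y)_i$, and the key finite-dimensional fact is that for any matrix $B$ this bilinear form is bounded by $\mu_p(B)\,\|y\|_p^p$ — equivalently, $\tfrac1p\frac{d}{dt}\|y\|_p^p\le\mu_p(B)\,\|y\|_p^p$ along $\dot y=By$. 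Since $\mu_p(B(\omega,t))=\mu_{p,Q}(A(\omega,t))\le\mu$, integrating this pointwise bound over $\Omega$ yields $p\mu V^p$ for the reaction term. Combining the two pieces gives $\frac{d}{dt}V^p\le p\mu\,V^p$, and Gr\"onwall's inequality delivers $V(t)\le e^{\mu t}V(0)$, which is the claim.

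The main obstacle is the regularity bookkeeping behind this clean computation. The factor $|w_i|^{p-2}$ is singular at $w_i=0$ when $p<2$, so neither the differentiation of $V^p$ nor the integration by parts is literally justified; I would regularize, replacing $|w_i|^p$ by $(\varepsilon^2+w_i^2)^{p/2}$, carry out the identical estimates (the boundary term still vanishes and the bulk diffusion term remains $\le 0$), and then let $\varepsilon\to 0^+$. One must also justify differentiation under the integral and the use of one-sided (Dini) derivatives where $V$ fails to be $C^1$, and establish the finite-dimensional inequality $\sum_i|y_i|^{p-2}y_i(By)_i\le\mu_p(B)\|y\|_p^p$ directly from the definition of $\mu_p$ as the one-sided derivative of the operator norm. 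Finally, the endpoints $p=1$ and $p=\infty$ lie outside the $V^p$ calculation and would be recovered either by the same Dini-derivative argument applied to $\|w\|_{1,Q}$ and $\|w\|_{\infty,Q}$ directly, or by passing to the limits $p\to 1^+$ and $p\to\infty$.
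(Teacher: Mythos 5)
The paper does not actually prove Lemma \ref{contracting}: it is imported from \cite{aminzare-sontag}, so there is no in-paper argument to match line by line. Your proposal is, however, essentially the $L^p$ analogue of the energy computation the paper does carry out for Theorem \ref{cont-weighted-L2} (the $p=2$, non-diagonal case): factor the difference of reaction terms through the averaged Jacobian, dispose of the diffusion term by Green's identity plus the Neumann condition, and control the reaction term pointwise in $\omega$ by the finite-dimensional logarithmic norm, then apply Gr\"onwall. The ingredients you cite are all correct: convexity of $\mu_{p,Q}$ gives $\mu_{p,Q}(A(\omega,t))\le\mu$; the substitution $y=Qw$, $B=QAQ^{-1}$ does convert the weighted reaction sum into $\sum_i|y_i|^{p-2}y_i(By)_i$, and the bound of that expression by $\mu_p(B)\,\|y\|_p^p$ is the standard semi-inner-product characterization of $\mu_p$; and the diagonality of $Q$ is precisely what lets the diffusion term decouple componentwise so that each $-(p-1)\int_\Omega |w_i|^{p-2}|\nabla w_i|^2\le 0$. (It is worth noting explicitly that this last step is where a non-diagonal weight would break the argument, which is why the paper's own Theorem \ref{cont-weighted-L2} is confined to $p=2$ and needs the extra hypothesis $QD+DQ>0$.) The regularization of $|w_i|^{p}$ for $1<p<2$ and the use of Dini derivatives are real but routine technical debts, and you have flagged them honestly.

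One caution on your treatment of the endpoints. Of the two fallbacks you offer for $p=1$ and $p=\infty$, the limiting argument $p\to 1^+$ (or $p\to\infty$) is the shaky one: the contraction rate in the statement is $\mu=\sup_{(x,t)}\mu_{p,Q}(J_F(x,t))$, which depends on $p$, and without uniform control over the whole family of Jacobians you cannot conclude that this supremum converges to the corresponding supremum at the endpoint; you would only obtain the estimate with a possibly larger constant. Use your other option instead: run the Dini-derivative estimate directly on $\|w(\cdot,t)\|_{1,Q}$ and $\|w(\cdot,t)\|_{\infty,Q}$, where the dissipativity of $d_i\Delta$ under Neumann conditions must be established by Kato-inequality or maximum-principle arguments rather than by the $V^p$ computation.
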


Before stating the main theorem of this section, we first prove the following:

\begin{lemma}\label{Lyapanov-inequality}
Suppose that $P$ is a positive definite, symmetric matrix and $A$ is an arbitrary matrix.
\begin{enumerate}
\item If $\mu_{2,P}(A)= \mu$, then $QA+A^TQ\leq 2\mu Q$, where $Q=P^2$. 
\item If for some $Q=Q^T>0$, $QA+A^TQ\leq 2\mu Q$, then there exists $P=P^T>0$ such that $P^2=Q$ and $\mu_{2,P}(A)\leq \mu$.
\end{enumerate}
\end{lemma}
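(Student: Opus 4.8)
The plan is to reduce everything to the standard (Euclidean) $L^2$ logarithmic norm via a change of variables, and then to exploit the fact that congruence by a positive definite matrix preserves the Loewner ordering. The single fact that drives both parts is the identity
\[
\mu_{2,P}(A)\;=\;\lambda_{\max}\!\left(\tfrac{1}{2}\bigl(PAP^{-1}+P^{-1}A^TP\bigr)\right).
\]
To establish it I first observe that the induced operator norm satisfies $\|A\|_{2,P\to 2,P}=\sup_{\|Px\|_2=1}\|PAx\|_2=\sup_{\|y\|_2=1}\|PAP^{-1}y\|_2=\|PAP^{-1}\|_{2\to 2}$, after the substitution $y=Px$. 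Passing to the directional derivative in the definition of the logarithmic norm then gives $\mu_{2,P}(A)=\mu_2(PAP^{-1})$, and combining with the classical formula $\mu_2(B)=\lambda_{\max}(\tfrac12(B+B^T))$, together with $P^T=P$, yields the displayed identity. In particular the condition $\mu_{2,P}(A)\le\mu$ is, for symmetric matrices, equivalent to $PAP^{-1}+P^{-1}A^TP\le 2\mu I$.

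For part (1), starting from $\mu_{2,P}(A)=\mu$ I have $PAP^{-1}+P^{-1}A^TP\le 2\mu I$. I would conjugate this inequality by $P$: since $P>0$, the map $X\mapsto PXP$ preserves the ordering $\le$, and a short computation gives $P\bigl(PAP^{-1}+P^{-1}A^TP\bigr)P=P^2A+A^TP^2$ while $P(2\mu I)P=2\mu P^2$. Setting $Q=P^2$ produces exactly $QA+A^TQ\le 2\mu Q$.

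For part (2), given $Q=Q^T>0$ with $QA+A^TQ\le 2\mu Q$, I take $P=Q^{1/2}$, the unique positive definite symmetric square root, so that $P=P^T>0$ and $P^2=Q$. Conjugating the hypothesis by $P^{-1}$ (again an order-preserving map) gives $P^{-1}(P^2A+A^TP^2)P^{-1}=PAP^{-1}+P^{-1}A^TP\le 2\mu I$, which by the identity above is precisely $\mu_{2,P}(A)\le\mu$.

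The only real subtlety is the first step, namely verifying the identity $\mu_{2,P}(A)=\lambda_{\max}(\tfrac12(PAP^{-1}+P^{-1}A^TP))$ cleanly, since everything afterward is routine congruence bookkeeping with the symmetric square root. I expect no genuine obstacle here: once the change of variables is in place, both implications are symmetric and follow from the elementary observation that $X\mapsto PXP$ and $X\mapsto P^{-1}XP^{-1}$ preserve positive semidefiniteness.
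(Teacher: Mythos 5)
Your proposal is correct and follows essentially the same route as the paper: both reduce $\mu_{2,P}(A)\le\mu$ to the matrix inequality $PAP^{-1}+P^{-1}A^TP\le 2\mu I$ and then pass between it and $QA+A^TQ\le 2\mu Q$ by congruence with $P$ or $P^{-1}$, using the symmetric positive definite square root of $Q$ for part (2). The only difference is that you spell out the identity $\mu_{2,P}(A)=\mu_2(PAP^{-1})=\lambda_{\max}\bigl(\tfrac12(PAP^{-1}+P^{-1}A^TP)\bigr)$ explicitly, which the paper treats as immediate from the definition (and your conjugation in part (2) is by $P^{-1}$ on both sides, which is the correct reading of the paper's slightly garbled ``by $P$ from right and by $P^{-1}$ from left'').
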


\begin{proof}
First suppose $\mu_{2,P}(A)= \mu$. By definition of $\mu$:
\[\displaystyle\frac{1}{2}\left(PAP^{-1}+\left(PAP^{-1}\right)^T\right)\leq \mu I.\]
Since $P$ is symmetric, so is $P^{-1}$, so
\[PAP^{-1}+P^{-1}A^TP\leq 2\mu I.\]
Now multiplying the last inequality by $P$ on the right and the left, we get:
\[P^2A+A^TP^2\leq 2\mu P^2.\]
This proves $1$.
Now assume that for some $Q=Q^T>0$, $QA+A^TQ\leq 2\mu Q$. Since $Q>0$, there exists $P>0$ such that $P^TP=Q$; moreover, because $Q$ is symmetric, so is $P$. Hence we have:
\[P^2A+A^TP^2\leq 2\mu P^2.\]
Multiplying the last inequality by $P$ from right and by $P^{-1}$ from left, we conclude $2$. 
\end{proof}

\begin{remark}\label{Lyapanov-inequality2}
Observe that for $Q>0$,  
\begin{enumerate}
\item\[QA+A^TQ\leq \mu Q\quad\Rightarrow\quad QA+A^TQ\;\leq\; \beta I,\] where $\beta=\mu\lambda$ and $\lambda$ is the smallest eigenvalue of $Q$.
\item \[QA+A^TQ\;\leq\; \beta I\quad\Rightarrow\quad QA+A^TQ\;\leq\; \gamma Q,\] where $\gamma=\displaystyle\frac{\beta}{\lambda'}$ and $\lambda'$ is the largest eigenvalue of $Q$.

\end{enumerate}
\end{remark}

\begin{theorem}\label{cont-weighted-L2}
Consider the reaction diffusion system $(\ref{re-di})-(\ref{i-c})$ and suppose Assumption \ref{as-pde} holds. Let 
\[\mu:=\displaystyle\sup_{(x, t)\in V\times [0, \infty)}\mu_{2,P}(J_F(x, t)),\] 
for a positive symmetric (not necessarily diagonal) matrix $P$. In addition assume that $QD+DQ>0$, where $Q=P^2.$ Then for any two solutions, namely $u$ and $v$, of $(\ref{re-di})-(\ref{i-c})$, we have:
\begin{equation}\label{contraction}
 \|u(\cdot, t)-v(\cdot, t)\|_{2,P}\leq e^{\mu t}\|u(\cdot, 0)-v(\cdot, 0)\|_{2,P}.
\end{equation}
\end{theorem}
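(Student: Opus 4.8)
The plan is to track the evolution of the squared $P$-weighted $L^2$ norm of the difference of two solutions and to show it decays at rate $2\mu$. Set $w:=u-v$; it again satisfies the no-flux condition and solves $\partial_t w=F(u,t)-F(v,t)+D\Delta w$. Since $Q=P^2$ is symmetric positive definite and $\|w\|_{2,P}^2=\int_\Omega w^TQw\,d\omega$, I would define
\[
V(t):=\tfrac12\,\|w(\cdot,t)\|_{2,P}^2=\tfrac12\int_\Omega w^TQw\,d\omega .
\]
Differentiating under the integral sign (justified by the $C^1$-in-$t$, $C^2$-in-$\omega$ regularity in the definition of solution) and using the symmetry of $Q$ gives
\[
\dot V(t)=\int_\Omega w^TQ\bigl(F(u,t)-F(v,t)\bigr)\,d\omega+\int_\Omega w^TQD\,\Delta w\,d\omega ,
\]
and I would bound the two integrals separately.

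For the reaction integral, convexity of $V$ lets me write, pointwise in $(\omega,t)$, $F(u,t)-F(v,t)=A(\omega,t)\,w$ where $A(\omega,t)=\int_0^1 J_F(v+sw,t)\,ds$. By hypothesis $\mu_{2,P}(J_F(x,t))\le\mu$ everywhere, so part $1$ of Lemma \ref{Lyapanov-inequality} yields $QJ_F+J_F^TQ\le 2\mu Q$ at every argument; this matrix inequality is preserved under averaging over $s\in[0,1]$, so $QA+A^TQ\le 2\mu Q$ as well. Since the scalar $w^TQAw$ equals $\tfrac12 w^T(QA+A^TQ)w$, the reaction integral is bounded by $\mu\int_\Omega w^TQw\,d\omega=2\mu V(t)$.

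For the diffusion integral, set $M:=QD$ and apply Green's identity componentwise. The no-flux condition $\partial w_j/\partial\mathbf{n}=0$ on $\partial\Omega$ annihilates the boundary terms, leaving
\[
\int_\Omega w^TM\,\Delta w\,d\omega=-\sum_{k=1}^m\int_\Omega (\partial_k w)^T M\,(\partial_k w)\,d\omega ,
\]
where $\partial_k w$ is the vector of $k$-th partial derivatives of the components of $w$. Replacing each quadratic form by its symmetric part $\tfrac12(M+M^T)=\tfrac12(QD+DQ)$ (using $D^T=D$ and $Q^T=Q$) and invoking the hypothesis $QD+DQ>0$ makes every integrand nonnegative, so the entire diffusion integral is $\le 0$. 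Combining the two bounds gives $\dot V(t)\le 2\mu V(t)$, and Gr\"onwall's inequality yields $V(t)\le e^{2\mu t}V(0)$, which is exactly (\ref{contraction}) after taking square roots.

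The main obstacle is the diffusion term: because $P$, hence $Q$, is allowed to be non-diagonal, $M=QD$ need not be symmetric, so the Dirichlet-type quadratic forms produced by the integration by parts are not automatically sign-definite. This is precisely where the extra hypothesis $QD+DQ>0$ must enter — a condition that is vacuous in the diagonal setting of \cite{aminzare-sontag}, where $QD$ is already a positive diagonal matrix. I would verify that the symmetrization step is the only place this hypothesis is used, and in fact that positive \emph{semi}definiteness of $QD+DQ$ already suffices for the estimate. A secondary, purely technical point is justifying the differentiation of $V$ and the applicability of Green's identity, both of which follow directly from the regularity built into the definition of solution.
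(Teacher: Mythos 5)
Your proposal is correct and follows the same overall architecture as the paper's proof: the energy functional $\Phi(w)=\frac12\int_\Omega w^TQw\,d\omega$, the splitting of $\dot\Phi$ into a reaction term and a diffusion term, the Lyapunov inequality $QJ_F+J_F^TQ\le 2\mu Q$ from Lemma \ref{Lyapanov-inequality} combined with the integral mean value theorem for the reaction term, and Gr\"onwall to finish. The one place you genuinely diverge is the diffusion term, which is the only part where the non-diagonality of $P$ matters. The paper decomposes $QD=\frac12(QD+DQ)+\frac12(QD-DQ)$, factors the symmetric part as $QD+DQ=2M^TM$ with $M$ symmetric positive definite so that $(w,(QD+DQ)\Delta w)=-2(\nabla(Mw),\nabla(Mw))\le 0$, and shows the antisymmetric part contributes zero via Green's second identity and $DQ=(QD)^T$. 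You instead integrate by parts once, componentwise, to get $\int_\Omega w^TQD\,\Delta w=-\sum_k\int_\Omega(\partial_k w)^TQD\,(\partial_k w)$ and then symmetrize the quadratic form in the integrand, so that only $\frac12(QD+DQ)$ survives. The two arguments are equivalent in content (both ultimately discard the antisymmetric part of $QD$ and use positivity of the symmetric part), but yours is more direct, avoids the matrix square root, and makes transparent your correct side observation that positive \emph{semi}definiteness of $QD+DQ$ already suffices for the estimate. Your remaining steps (averaging the matrix inequality over $s$, the identity $w^TQAw=\frac12 w^T(QA+A^TQ)w$, and the regularity needed to differentiate under the integral) are all sound.
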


\begin{proof} \footnote[1]{The techniques of the proof are similar to the proof of Theorem $1$, \cite{Arcak}}
By Lemma \ref{Lyapanov-inequality}, 
\begin{equation}\label{Lyap-ineq}
QJ_F+J_F^TQ\leq2\mu Q,
\end{equation}
 where $Q=P^2.$ Let $u$, and $v$ be two solutions of $(\ref{re-di})-(\ref{i-c})$ and let $w:=u-v$. Define 
\[\Phi(w)\;:=\;\displaystyle\frac{1}{2}(w, Qw)=\displaystyle\frac{1}{2}\|Pw\|_2^2,\]
where $(x,y):=\displaystyle\int_{\Omega}x^Ty$. Since $u$, and $v$ satisfy $(\ref{re-di})$, we have:
\begin{equation}\label{Phi-dot}
\displaystyle\frac{d\Phi}{dt}(w)=\left(w, Q(F(u, t)-F(v, t))\right)+\left(w, QD\Delta w\right).
\end{equation}
Since $QD+DQ$ is a positive matrix, there exists a positive, symmetric matrix $M$ such that $QD+DQ=2M^TM$.
\begin{equation}\label{eq1:proof of theorem}
(w, (QD+DQ)\Delta w)\;=\;2(w, M^TM\Delta w)\;=\;2(Mw,\Delta Mw)\;=\;-2(\nabla(Mw), \nabla(Mw)).
\end{equation}
The last equality holds by the Mean Value Theorem and the fact that $Mw$ satisfies the Neumann boundary conditions.   

By Green's identity and using $DQ=(QD)^T$, we have:
\begin{equation}\label{eq2:proof of theorem}
(w, (QD-DQ)\Delta w)\;=\;(w, \Delta QDw)-(QDw, \Delta w)\;=\;0.
\end{equation}
Equations (\ref{eq1:proof of theorem}) and (\ref{eq2:proof of theorem}) imply:
\[(w, QD\Delta w)\;=\;-(\nabla(Mw), \nabla(Mw))\;\leq\;0.\]
Now by Mean Value Theorem for integrals, and using (\ref{Lyap-ineq}), we rewrite the first term of the right hand side of (\ref{Phi-dot}) as follows:
 \[
 \begin{array}{lcl}
 \left(w, Q(F(u, t)-F(v, t))\right)&=&\displaystyle\int_{\Omega}w^T(\omega, t)Q(F(u(\omega, t), t)-F(v(\omega, t), t))\;d\omega\\
 &=&\displaystyle\int_{\Omega}w^T(\omega, t)Q\displaystyle\int_0^1J_F(v(\omega, t)+sw(\omega, t), t)\cdot w(\omega, t)\;ds\;d\omega\\
  &\leq&\mu\displaystyle\int_0^1\;ds\displaystyle\int_{\Omega}w^TQw\;d\omega\\
  &=&\mu\displaystyle\int_{\Omega}w^TQw\;d\omega\\
  &=&2\mu\Phi(w).
 \end{array}
 \]
 Therefore \[\displaystyle\frac{d\Phi}{dt}(w)\;\leq\; 2\mu\Phi(w).\]
 This last inequality implies (\ref{contraction}).
\end{proof}
\begin{corollary}\label{cor-cont-weighted-L2}
In Theorem \ref{cont-weighted-L2}, if $\mu<0$, then $(\ref{re-di})-(\ref{i-c})$ is contracting, meaning that solutions converge (exponentially) to each other, as $t\to+\infty$ in the weighted $L_{2,P}$ norm.
\end{corollary}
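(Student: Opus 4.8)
The plan is to derive the corollary as an immediate specialization of Theorem \ref{cont-weighted-L2}, since the word "contracting" is just a reading of the exponential bound (\ref{contraction}) in the regime $\mu<0$. First I would fix two arbitrary solutions $u$ and $v$ of $(\ref{re-di})-(\ref{i-c})$. Because the hypotheses of the theorem (Assumption \ref{as-pde}, the positivity $QD+DQ>0$ with $Q=P^2$, and the definition of $\mu$) are already in force, the theorem applies verbatim and yields, for every $t\ge 0$,
\[
\|u(\cdot, t)-v(\cdot, t)\|_{2,P}\;\leq\; e^{\mu t}\,\|u(\cdot, 0)-v(\cdot, 0)\|_{2,P}.
\]

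Next I would invoke the extra hypothesis $\mu<0$. Writing $e^{\mu t}=e^{-|\mu| t}$, the scalar prefactor decays to zero as $t\to+\infty$ at the uniform exponential rate $|\mu|$. The initial weighted distance $\|u(\cdot, 0)-v(\cdot, 0)\|_{2,P}$ is a fixed finite constant — finiteness being guaranteed since $u(\cdot,0),v(\cdot,0)\in\mathbf{Y}$ on the bounded domain $\Omega$, so the integral defining the $L^2$ norm converges. Consequently the right-hand side of the displayed inequality tends to zero exponentially, and hence so does the nonnegative left-hand side. This is exactly the statement that the two solutions converge to one another exponentially fast in the weighted $L_{2,P}$ norm, which is the asserted contraction property.

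I do not anticipate a genuine obstacle: the corollary follows in one line from the theorem, and the only point worth stating carefully is the interpretation of "contracting," namely that \emph{every} pair of solutions, regardless of initial data, collapses toward one another at the common exponential rate $|\mu|$ dictated by the worst-case logarithmic norm $\mu=\sup_{(x,t)}\mu_{2,P}(J_F(x,t))$ of the reaction Jacobian. If one wished to phrase convergence toward a single trajectory (rather than merely pairwise), the same bound combined with a Cauchy-sequence argument in the complete space underlying the $L_{2,P}$ norm would give existence of a limiting solution, but for the present statement the pairwise exponential decay above already suffices.
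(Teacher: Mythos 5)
Your proposal is correct and matches the paper's intent: the paper states this corollary without proof precisely because it follows immediately from the bound $\|u(\cdot,t)-v(\cdot,t)\|_{2,P}\leq e^{\mu t}\|u(\cdot,0)-v(\cdot,0)\|_{2,P}$ of Theorem \ref{cont-weighted-L2} together with $e^{\mu t}\to 0$ when $\mu<0$. Your additional remarks on finiteness of the initial distance and on the interpretation of ``contracting'' are sound but not needed beyond this one-line observation.
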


\textbf{Example $1$.}
In \cite{aminzare-sontag} we studied the following system: 
\[
\begin{array}{lcl}\label{example}
x_t=z-\delta x+k_1y-k_2(S_Y-y)x+d_1\Delta x\\
y_t=-k_1y+k_2(S_Y-y)x+d_2\Delta y,
\end{array}
\]
where $(x(t), y(t))\in V=[0, \infty)\times[0, S_Y]$ for all $t\geq0$ ($V$ is convex), and $S_Y$, $k_1$, $k_2$, $\delta$, $d_1$, and $d_2$ are arbitrary positive constants.
 
 In \cite{Russo}, it has been shown that for $p=1$, there exists a positive, diagonal matrix $Q$ such that for all $(x,y)\in V$, $\mu_{1,Q}(J_F(x,y))<0$; and then by Lemma \ref{contracting} one can conclude that the system is contractive. 
 
 We showed that for any positive, diagonal matrix $Q$ and any $p>1$, there exists $(x,y)\in V$ such that $\mu_{p,Q}(J_F(x,y))\geq 0$, where 
\[F=(z-\delta x+k_1y-k_2(S_Y-y)x, -k_1y+k_2(S_Y-y)x)^T,\] and 
\[J_F=\displaystyle\left(\begin{array}{cc}-\delta-a & b\\a & -b\end{array}\right),\]
 with $a=k_2(S_Y-y)\in[0,k_2S_Y]$ and $b=k_1+k_2x\in[k_1,\infty)$.

Now we show that there exists some positive, symmetric (but non-diagonal) matrix $P$ such that for all $(x,y)\in V$, $\mu_{2,P}J_F(x,y)<0$. Then for those $d_1$ and $d_2$ that satisfy $P^2D+DP^2>0$, where $D=\diag(d_1, d_2)$, Corollary \ref{cor-cont-weighted-L2} concludes that the system is contractive. 

{\em Claim.} Let $Q= \mbox{%
$\begin{bmatrix}1 & 1\\1 & q\end{bmatrix}$}$, where $q>1+\displaystyle\frac{\delta}{4k_1}$. Then $QJ_F+(QJ_F)^T<0$. 

Note that $Q$ is symmetric and positive (because $q>1$).

{\em Proof of Claim.} We first compute
\[QJ_F= \mbox{%
$\begin{bmatrix}1 & 1\\1 & q\end{bmatrix}$
$\begin{bmatrix}-\delta-a & b\\a & -b\end{bmatrix}$ =
$\begin{bmatrix}-\delta & 0\\-\delta+(q-1)a & -b(q-1)\end{bmatrix}$}.\]
So 
\[QJ_F+(J_FQ)^T= \mbox{%
$\begin{bmatrix}-2\delta &-\delta+(q-1)a \\-\delta+(q-1)a & -2b(q-1)\end{bmatrix}$}.\]
To show $QJ_F+J_F^TQ<0$, we show that $\det \left(QJ_F(x,y)+J_F^T(x,y)Q\right)>0$ for all $(x,y)\in V$:
\[\det \left(QJ_F+J_F^TQ\right)=4\delta b(q-1)-\left(-\delta+(q-1)a\right)^2.\]

Note that for any $q>1$, $f(a):=\left(-\delta+(q-1)a\right)^2\leq \delta^2$ on $[0, k_2S_Y]$, and $g(b):=4\delta b(q-1)\geq 4\delta k_1(q-1)$ on $[k_1,\infty]$. So to have $\det>0$, it's enough to have $4\delta k_1(q-1)-\delta^2>0$, i.e. $q-1>\displaystyle\frac{\delta^2}{4\delta k_1}$, i.e. $q>1+\displaystyle\frac{\delta}{4 k_1}$. \qed

Now by Remark \ref{Lyapanov-inequality2} and Lemma \ref{Lyapanov-inequality}, for $P=\sqrt Q$, $\mu_{2,P}(J_F(x,y))<0$, for all $(x,y)\in V.$

\textbf{Example $2$.}  We now provide an example of a class of reaction-diffusion systems $x_t=F(x)+D\Delta x$, with $x\in V$ ($V$ convex), such that for some positive definite, diagonal matrix $Q$, and for all $x\in V$, $\mu_{1,Q}(J_F(x))<0$ (and hence by Lemma \ref{contracting}, these systems are contractive), yet for these systems, $\mu_{2,P}(J_F(x))\nless0$, even for any positive definite, symmetric (not necessarily diagonal) matrix $P$. Consider two variable systems of the following type 
\begin{equation}\label{mu1-vs-mu2-1}
x_t=-f_1(x)+g_1(y)+d_1\Delta x\\
\end{equation}
\begin{equation}\label{mu1-vs-mu2-2}
y_t=f_2(x)-g_2(y)+d_2\Delta y,
\end{equation}
where $d_1$, $d_2$ are positive constants and $(x,y)\in V=[0,\infty)\times[0,\infty)$. The functions $f_i$ and $g_i$ take non-negative values. Systems of this form models a case where $x$ decays according to $f_1$, $y$ decays according to $g_2$, and there is a positive feedback from $y$ to $x$ ($g_1$) and a positive feedback from $x$ to
$y$ ($f_2$). 

\begin{lemma}\label{conditions-mu1-vs-mu2}
In system (\ref{mu1-vs-mu2-1})-(\ref{mu1-vs-mu2-2}), let $J$ be the Jacobian matrix of 
\[\left(-f_1(x)+g_1(y), f_2(x)-g_2(y)\right)^T.\] In addition, assume that the following conditions hold for some $\lambda>0$, and $\mu>0$ and all $(x,y)\in V$:
\begin{enumerate}
\item $-f_1'(x)+\lambda \abs{f_2'(x)}<-\mu<0$;
\item$ -g_2'(y)+\displaystyle\frac{1}{\lambda} \abs{g_1'(y)}<-\mu<0$;
\item for any $p_0\in\r$
\begin{equation*}\label{rate-infty}
\displaystyle\lim_{y\to\infty}\displaystyle\frac{\left(g_1'(y)-p_0g_2'(y)\right)^2}{g_2'(y)}=\infty.
\end{equation*}
\end{enumerate}

Then 
\begin{enumerate}
\item for every $(x,y)\in V$, $\mu_{1,Q}(J(x,y))<0$, where $Q=\diag(1,\lambda)$; and
\item for each positive definite, symmetric matrix $P$, there exists some $(x,y)\in V$, such that $\mu_{2,P}(J(x,y))\geq0$.
\end{enumerate}
\end{lemma}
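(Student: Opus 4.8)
The plan is to treat the two assertions separately, since they call for different tools.

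For assertion (1), I would reduce to the explicit formula for the weighted $\ell^1$ logarithmic norm. Because $\|\cdot\|_{1,Q}=\|Q\cdot\|_1$ with $Q=\diag(1,\lambda)$ diagonal, one has $\mu_{1,Q}(J)=\mu_1(QJQ^{-1})$, and $\mu_1$ of a matrix is the maximum over its \emph{columns} of the diagonal entry plus the sum of absolute values of the off-diagonal entries in that column. Writing
\[J=\begin{pmatrix}-f_1'(x)&g_1'(y)\\ f_2'(x)&-g_2'(y)\end{pmatrix},\]
a direct computation gives $QJQ^{-1}=\begin{pmatrix}-f_1'&g_1'/\lambda\\ \lambda f_2'&-g_2'\end{pmatrix}$, whose two column measures are exactly $-f_1'+\lambda\abs{f_2'}$ and $-g_2'+\tfrac{1}{\lambda}\abs{g_1'}$. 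Hypotheses (1) and (2) say precisely that both are $<-\mu<0$, so $\mu_{1,Q}(J)<0$ on all of $V$. This step is routine bookkeeping.

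Assertion (2) is the substantive one. By part 1 of Lemma \ref{Lyapanov-inequality}, for $P=\sqrt Q$ the inequality $\mu_{2,P}(J)<0$ forces $QJ+J^TQ<0$; equivalently, to prove $\mu_{2,P}(J)\ge 0$ at some point it suffices to exhibit $(x,y)$ where the symmetric matrix $S:=QJ+J^TQ$ fails to be negative definite. I would write $Q=\begin{pmatrix}q_1&q_3\\ q_3&q_2\end{pmatrix}$, where positive definiteness gives $q_1>0$, compute $S$ explicitly, then freeze $x$ at an arbitrary $x_0\in[0,\infty)$ and let $y\to\infty$, studying $\det S$ as a scalar function of $y$. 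For a symmetric $2\times2$ matrix, $\det S<0$ already rules out negative definiteness, so it is enough to show $\det S\to-\infty$ along $y\to\infty$.

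The crux, and the step I expect to be the main obstacle, is recognizing that hypothesis (3) is tuned to make this work through the choice $p_0=q_3/q_1$ (well defined since $q_1>0$). With $u=g_1'(y)$, $v=g_2'(y)$ and $w:=u-p_0v$, the off-diagonal entry becomes $S_{12}=q_1w+c$ for a constant $c$ fixed by $x_0$, while $S_{11}$ is constant and $S_{22}$ is linear in $(u,v)$; substituting $u=w+p_0v$ throughout expresses $\det S=S_{11}S_{22}-S_{12}^2$ as $-q_1^2w^2$ plus terms that are $O(w)$ and $O(v)$. Hypothesis (3) says exactly $w^2/v\to\infty$ as $y\to\infty$, and since $v=g_2'>\mu>0$ by hypothesis (2) this also forces $w^2\to\infty$; dividing through by $w^2$ then gives $\det S/w^2\to-q_1^2<0$, hence $\det S\to-\infty$. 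Thus for $y$ large enough $S$ is indefinite and $\mu_{2,P}(J(x_0,y))\ge0$. The delicate points are verifying that no cancellation can destroy the leading $-q_1^2w^2$ term — which is where $q_1>0$ is used — and that the $O(v)$ remainder is negligible against $w^2$, which is precisely the content of (3); the strict positivity $v=g_2'>\mu>0$ inherited from (2) is what converts the quotient condition $w^2/v\to\infty$ into the needed $w^2\to\infty$.
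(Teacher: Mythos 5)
Your proposal is correct and follows essentially the same route as the paper: part (1) by the explicit column formula for $\mu_{1,Q}$, and part (2) by reducing via Lemma \ref{Lyapanov-inequality} to showing $QJ+J^TQ\nless0$, freezing $x_0$, choosing $p_0$ as the ratio of the off-diagonal to the $(1,1)$ entry of the weight matrix, and using condition (3) to force $\det<0$ for large $y$. The only differences are cosmetic: you normalize $\det S$ by $w^2$ where the paper divides by $p_1^2 g_2'(y)$, and you are slightly more careful than the paper in phrasing the reduction in terms of $Q=P^2$ rather than $P$ itself.
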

  
\begin{proof} 
The proof of $\mu_{1,Q}(J(x,y))<0$ is straightforward from the definition of $\mu_{1,Q}$ and conditions $1$ and $2$. Now we show that for any positive matrix $P= \mbox{%
$\begin{bmatrix}{p_1} & p \\p & {p_2}\end{bmatrix}$}$, there exists some $(x_0, y_0)\in V$ such that $\mu_{2,P}(J(x_0,y_0))\geq0$. By Lemma \ref{Lyapanov-inequality}, it's enough to show that for some $(x_0, y_0)\in V$, $PJ(x_0,y_0)+J^T(x_0,y_0)P\nless0$. We compute:
\[ 
\mbox{%
$PJ=\begin{bmatrix}
p_1&p\\p& p_2
  \end{bmatrix}
  \begin{bmatrix}
-f_1'(x) & g_1'(y) \\ f_2'(x) & -g_2'(y)
  \end{bmatrix}=\displaystyle\begin{bmatrix}
-p_1f_1'(x)+pf_2'(x)&p_1g_1'(y)-pg_2'(y)\\-pf_1'(x)+p_2f_2'(x)& pg_1'(y)-p_2g_2'(y)
  \end{bmatrix}
  $}. 
  \]
Therefore, 
\[ 
\mbox{%
$PJ+(PJ)^T=\displaystyle\begin{bmatrix}
2\left(-p_1f_1'(x)+pf_2'(x)\right)&p_1g_1'(y)-pg_2'(y)-pf_1'(x)+p_2f_2'(x)\\p_1g_1'(y)-pg_2'(y)-pf_1'(x)+p_2f_2'(x)& 2\left(pg_1'(y)-p_2g_2'(y)\right)
  \end{bmatrix}
  $}. 
  \]
Now fix $x_0\in [0,\infty)$ and let 
\[A\;:=\;2\left(-p_1f_1'(x_0)+pf_2'(x_0)\right)\quad \mbox{and}\quad B\;:=\;-pf_1'(x_0)+p_2f_2'(x_0).\]
 Then 
 \begin{equation}\label{det}
 \det\left(PJ+(PJ)^T\right)=2A\left(pg_1'(y)-p_2g_2'(y)\right)-\left(p_1g_1'(y)-pg_2'(y)+B\right)^2.
 \end{equation}
 We will show that $\det<0$. 
 Dividing both sides of (\ref{det}) by $p_1^2g_2'(y)$, we get: 
  \[
   \begin{array}{rcl}
\displaystyle\frac{\det\left(PJ+(PJ)^T\right)}{p_1^2g_2'(y)}&=&\displaystyle\frac{2A\left(pg_1'(y)-p_2g_2'(y)\right)}{p_1^2g_2'(y)}
-\displaystyle\frac{\left(g_1'(y)-p_0g_2'(y)+B'\right)^2}{g_2'(y)}\\
&=&A'p\displaystyle\frac{g_1'(y)}{g_2'(y)}-A'p_2\\
&-&\displaystyle\frac{\left(g_1'(y)-p_0g_2'(y)\right)^2}{g_2'(y)}-2B'\displaystyle\frac{g_1'(y)}{g_2'(y)}+2B'p_0-\displaystyle\frac{B'^2}{g_2'(y)}\\
 \end{array}
 \]
where $p_0=\displaystyle\frac{p}{p_1}$, $A'=\displaystyle\frac{2A}{p_1^2}$, and $B'=\displaystyle\frac{B}{p_1}$. 

(Note that $p_1^2g_2'(y)>0$ because by condition $2$, $g_2'\geq\mu>0$, and $P>0$ implies $p_1\neq0$.)

By condition $2$, $0\leq\displaystyle\frac{g_1'(y)}{g_2'(y)}\leq\lambda<\infty$ for all $y$. Now using condition $3$, we can find $y$ large enough such that $\det<0.$

 Since $\det\left(PJ(x_0,y_0)+(PJ(x_0,y_0))^T\right)<0$ for some $(x_0,y_0)\in V$, the matrix $PJ+(PJ)^T$ has one positive eigenvalue. Therefore $PJ+(PJ)^T\nless0$. 
\end{proof}

 As a concrete example, take the following system 
\[
\begin{array}{lcl}
x_t=-x+y^{2+\epsilon}+d_1\Delta x\\
y_t=\delta x-(y^3+y^{2+\epsilon}+dy)+d_2\Delta y,
\end{array}
\]
where $0<\delta<1$, $0<\epsilon\ll1$, $d$, $d_1$, and $d_2$ are positive constants and $(x,y)\in V=[0, \infty)\times[0, \infty)$.

In this example we show that, the system is contractive in a weighted $L^1$ norm; while for any positive, symmetric matrix $P$, and some $(x,y)\in V$, $\mu_{2,P}J_F(x,y)\nless0$. To this end, we verify the conditions of Lemma \ref{conditions-mu1-vs-mu2}.

For any $(x,y)\in V$, we take in Lemma \ref{conditions-mu1-vs-mu2}, $\lambda=1$, and any $\mu\in(0,\min\{d, 1-\delta\})$:
\begin{enumerate}
\item $-1+\delta<0$, because $0<\delta<1$.
\item $ -\left(3y^2+(2+\epsilon)y^{1+\epsilon}+d\right)+(2+\epsilon)y^{1+\epsilon}=-3y^2-d\leq-d<0.$
\item For any $p_0\in \r$, 
\[\displaystyle\lim_{y\to\infty}\displaystyle\frac{\left((2+\epsilon)y^{1+\epsilon}-p_0\left(3y^2+(2+\epsilon)y^{1+\epsilon}+d\right)\right)^2}{3y^2+(2+\epsilon)y^{1+\epsilon}+d}=\infty.\]
\end{enumerate}
So the conditions in Lemma \ref{conditions-mu1-vs-mu2} are verified. \qed

%Reaction-Diffusion ODEs
\section{Diffusive interconnection of ODEs}
In this section, we derive a result analogous to that for PDE's for a network of identical ODE models which are diffusively interconnected. We study systems of ODE's as follows:
\begin{equation}\label{discrete}
\dot{u}(t)=\tilde{F}(u(t))-(L\otimes D)u(t).
\end{equation}
\begin{assumption}\label{as-ode}
In $(\ref{discrete})$, we assume:
\begin{itemize}
\item For a fixed convex subset of $\r^n$, say $V$, $\tilde{F}\colon V^{N}\to\r^{nN}$ is a function of the form:
\[\tilde{F}(u)=\left(F(u^1)^T, \cdots, F(u^N)^T\right)^T,\]
 where $u=\left({u^1}^T,\cdots, {u^N}^T\right)^T$, with $u^i\in V$ for each $i$, and $F\colon V\to\r^n$ is a (globally) Lipschitz function. 
\item For any $u\in V^N$ we define $\normpQ{u}$ as follows:
\[\normpQ{u}=\left\|\left(\|Qu^1\|_p, \cdots, \|Qu^N\|_p\right)^T\right\|_p,\]
where $Q$ is a symmetric and positive definite matrix and $1\leq p\leq\infty$.

With a slight abuse of notation, we use the same symbol for a norm in $\r^n$: \[\|x\|_{p,Q}:=\|Qx\|_p.\]

\item $u\colon [0,\infty)\to V^N$ is a continuously differentiable function.
\item $D=\diag(d_1,\cdots, d_n)$ with $d_i>0$, which we call the diffusion matrix.
\item $L\in\r^{N\times N}$ is a symmetric matrix and $L\mathbf{1}=0$, where $\mathbf{1}=(1,\cdots, 1)^T\in\r^{N}$. 
We think of $L$ as the Laplacian of a graph that describes the interconnections among component subsystems. 

\end{itemize}
\end{assumption}

In \cite{aminzare-sontag}, we proved the following lemma:
\begin{lemma}\label{contracting}
Consider the ODE system (\ref{discrete}), and suppose Assumption \ref{as-ode} holds. For some $1\leq p\leq\infty$, and a positive diagonal matrix $Q$, let
\[\mu\;:=\;\displaystyle\sup_{(x,t)\in V\times[0, \infty)}\mu_{p,Q}(J_F(x,t)).\]
($\mu_{p,Q}$ is the logarithmic norm induce by the norm $\|\cdot\|_{p,Q}$ on $\r^n$ defined by $\|x\|_{p,Q}:=\|Qx\|_p$.)
 
Then for any two solutions $u$ and $v$ of $(\ref{discrete})$, 
 \[\|u(t)-v(t)\|_{p,Q}\leq e^{\mu t}\|u(0)-v(0)\|_{p,Q}.\]
\end{lemma}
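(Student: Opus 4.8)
The plan is to bound the logarithmic norm of the Jacobian of the full right-hand side and then invoke the standard Gr\"onwall/matrix-measure contraction estimate, exactly as one does for a single ODE. Write $G(u) := \tilde{F}(u) - (L\otimes D)u$ and $w := u-v$ for two solutions. Applying the mean value theorem blockwise gives $\dot{w} = \tilde{M}(t)\,w - (L\otimes D)w$, where $\tilde{M}(t) = \int_0^1 J_{\tilde{F}}(v+sw)\,ds$ is block diagonal. The upper Dini derivative of $t\mapsto\|w(t)\|_{p,Q}$ is then controlled by $\mu_{p,Q}\bigl(\tilde{M}(t)-(L\otimes D)\bigr)\,\|w(t)\|_{p,Q}$, where $\mu_{p,Q}$ now denotes the logarithmic norm induced on $\r^{nN}$ by the aggregated norm of Assumption \ref{as-ode}. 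So everything reduces to the uniform bound $\mu_{p,Q}\bigl(J_{\tilde{F}}-(L\otimes D)\bigr)\leq\mu$, after which Gr\"onwall's inequality yields the claim.

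To obtain this bound I would use subadditivity of the matrix measure, $\mu_{p,Q}(A+B)\leq\mu_{p,Q}(A)+\mu_{p,Q}(B)$, splitting into the reaction part $J_{\tilde{F}}$ and the coupling part $-(L\otimes D)$. For the reaction part, two elementary observations suffice. First, because the aggregation over nodes uses the same exponent $p$ as the inner norm, $\|u\|_{p,Q}=\|(I_N\otimes Q)u\|_p$ is just a $Q$-weighted plain $\ell^p$ norm on $\r^{nN}$. Second, $J_{\tilde{F}}=\diag\bigl(J_F(u^1),\dots,J_F(u^N)\bigr)$ is block diagonal, and for such a norm the operator norm of a block-diagonal matrix is the maximum of the block operator norms; hence its logarithmic norm is the maximum of the block logarithmic norms, giving $\mu_{p,Q}(J_{\tilde{F}})=\max_i\mu_{p,Q}(J_F(u^i))\leq\mu$.

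The main obstacle is the coupling term: I must show $\mu_{p,Q}\bigl(-(L\otimes D)\bigr)\leq0$ for every $p\in[1,\infty]$, not merely for $p=2$ where it follows from $L\otimes D$ being positive semidefinite. Here I would first remove the weight: since $Q$ and $D$ are diagonal they commute, so $(I_N\otimes Q)(L\otimes D)(I_N\otimes Q)^{-1}=L\otimes D$, and the problem reduces to $\mu_p\bigl(-(L\otimes D)\bigr)\leq0$ in the plain $\ell^p$ norm. The clean way to see this is to inspect $I-h(L\otimes D)$ for small $h>0$: using that $L$ is a graph Laplacian (symmetric, with nonpositive off-diagonal entries and zero row and column sums) and $D$ positive diagonal, this matrix has nonnegative entries and both its row sums and column sums equal $1$, i.e.\ it is doubly stochastic. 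A doubly stochastic matrix is an $\ell^p$-contraction for every $p$ (by Jensen's inequality applied columnwise for $1\leq p<\infty$, and directly for $p=\infty$), so $\|I-h(L\otimes D)\|_p\leq1$ and therefore $\mu_p\bigl(-(L\otimes D)\bigr)\leq0$. This is the step where the graph-Laplacian structure, rather than mere symmetry of $L$, is genuinely used.

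Combining the three estimates gives $\mu_{p,Q}\bigl(\tilde{M}(t)-(L\otimes D)\bigr)\leq\mu$ uniformly in $t$, and integrating $D^+\|w(t)\|_{p,Q}\leq\mu\|w(t)\|_{p,Q}$ yields the stated exponential contraction bound. The only delicate points are the two structural facts above --- that the equal-exponent aggregation collapses to a plain $\ell^p$ norm under which block-diagonal matrices decouple, and the doubly-stochastic interpretation of $I-h(L\otimes D)$, which is precisely what makes the diffusive coupling harmless in every $\ell^p$ norm rather than only in $\ell^2$.
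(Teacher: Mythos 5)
The paper does not actually prove this lemma; it is quoted from \cite{aminzare-sontag}, so there is no in-text proof to compare against line by line. Your argument is, however, correct in outline and is essentially the standard route to such a statement: subadditivity of the matrix measure; the observation that the equal-exponent aggregated norm is just the plain $Q$-weighted $\ell^p$ norm on the product space, under which block-diagonal matrices decouple (so the reaction term contributes at most $\mu$); the similarity $(I_N\otimes Q)(L\otimes D)(I_N\otimes Q)^{-1}=L\otimes D$, valid precisely because $Q$ is diagonal; and finally the doubly-stochastic bound $\|I-h(L\otimes D)\|_p\leq 1$. Your identification of where diagonality of $Q$ enters is exactly the right one: that similarity is the step that breaks down for non-diagonal $P$, which is why Theorem \ref{discrete-weighted-L2} retreats to $p=2$, a quadratic Lyapunov function $\Phi(w)=\frac{1}{2}w^T(I_N\otimes Q)w$, and the extra hypothesis $QD+DQ>0$.

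One point you should make explicit: the doubly-stochastic argument needs the off-diagonal entries of $L$ to be nonpositive, so that $I-h(L\otimes D)$ has nonnegative entries for small $h>0$. Assumption \ref{as-ode} literally requires only that $L$ be symmetric with $L\mathbf{1}=0$, which implies neither this sign condition nor even that $L$ is positive semidefinite. The intended hypothesis is clearly that $L$ is a genuine graph Laplacian --- the paper silently relies on the same strengthening when it asserts in the proof of Theorem \ref{discrete-weighted-L2} that all eigenvalues of $L$ are nonnegative --- but your proof, unlike the $p=2$ one, uses the sign structure of the entries and not merely semidefiniteness, so the condition should be stated. With that caveat recorded, the remaining steps (Coppel's inequality for the Dini derivative of $\|w(t)\|_{p,Q}$, convexity of $\mu_{p,Q}$ to pass the bound through the averaged Jacobian $\int_0^1 J_F(v^i+sw^i)\,ds$, and Gr\"onwall) are all standard and correctly deployed.
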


In this section we generalize the above result for $p=2$ and any symmetric, positive definite (not necessarily diagonal) matrix $P$ such that $P^2D+DP^2>0$. 
\begin{theorem}\label{discrete-weighted-L2}
Consider the ODE system (\ref{discrete}) and suppose Assumption \ref{as-ode} holds. Let 
\[\mu:=\displaystyle\sup_{(x, t)\in V\times [0, \infty)}\mu_{2,P}(J_F(x, t)),\] 
for a positive symmetric (not necessarily diagonal) matrix $P$. In addition assume that $QD+DQ>0$, where $Q=P^2.$ Then for any two solutions, namely $u$ and $v$, of (\ref{discrete}), we have:
\begin{equation}\label{contraction_discrete}
 \|u(t)-v(t)\|_{2,P}\leq e^{\mu t}\|u(0)-v(0)\|_{2,P}.
\end{equation}
\end{theorem}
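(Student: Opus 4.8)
The plan is to follow the same Lyapunov-function strategy as in the proof of Theorem~\ref{cont-weighted-L2}, replacing the spatial integration-by-parts step by an argument that exploits the symmetry and positive semidefiniteness of the graph Laplacian $L$. Writing $w=u-v$ with blocks $w^i=u^i-v^i$, I would set $\Phi(w):=\tfrac12\|w\|_{2,P}^2=\tfrac12\sum_{i=1}^N (w^i)^TQw^i$, so that establishing $\frac{d\Phi}{dt}(w)\le 2\mu\,\Phi(w)$ yields $(\ref{contraction_discrete})$ by Gronwall's inequality exactly as before. As a preliminary, Lemma~\ref{Lyapanov-inequality} converts the hypothesis $\mu_{2,P}(J_F)\le\mu$ into the pointwise Lyapunov inequality $QJ_F+J_F^TQ\le 2\mu Q$, with $Q=P^2$.

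Next I would differentiate $\Phi$ along solutions. Using the block form $\dot w^i=F(u^i)-F(v^i)-\sum_{j}L_{ij}Dw^j$ coming from $(\ref{discrete})$, the derivative splits as $\frac{d\Phi}{dt}=R+\mathcal D$, where $R:=\sum_i (w^i)^TQ\bigl(F(u^i)-F(v^i)\bigr)$ is the reaction term and $\mathcal D:=-\sum_{i,j}L_{ij}(w^i)^TQDw^j$ is the coupling term. The reaction term is handled exactly as in Theorem~\ref{cont-weighted-L2}: by convexity of $V$ and the Mean Value Theorem, $F(u^i)-F(v^i)=\int_0^1 J_F(v^i+sw^i)\,w^i\,ds$, and the Lyapunov inequality gives $(w^i)^TQ(F(u^i)-F(v^i))\le \mu\,(w^i)^TQw^i$; summing over $i$ yields $R\le 2\mu\,\Phi(w)$.

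The crux is to show that the coupling term satisfies $\mathcal D\le 0$. Here I would first use $L_{ij}=L_{ji}$ to symmetrize: relabeling $i\leftrightarrow j$ shows that the antisymmetric part of $QD$ drops out, so that $\mathcal D=-\tfrac12\sum_{i,j}L_{ij}(w^i)^T(QD+DQ)w^j$. This cancellation is the discrete analogue of the vanishing of the $(QD-DQ)$ contribution via Green's identity in $(\ref{eq2:proof of theorem})$. Since $QD+DQ>0$ by hypothesis, I can write $QD+DQ=2M^TM$ with $M=M^T>0$ as in $(\ref{eq1:proof of theorem})$, set $z^i:=Mw^i$, and decompose coordinatewise: with $\zeta_k:=(z^1_k,\dots,z^N_k)^T$ for $k=1,\dots,n$, one gets $\mathcal D=-\sum_{k=1}^n \zeta_k^TL\zeta_k$. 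Because $L$ is the (symmetric) Laplacian of a graph it is positive semidefinite, so each $\zeta_k^TL\zeta_k\ge0$ and hence $\mathcal D\le0$.

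Combining the two estimates gives $\frac{d\Phi}{dt}(w)\le 2\mu\,\Phi(w)$, which integrates to $(\ref{contraction_discrete})$. The main obstacle is the coupling term: unlike the diagonal-$Q$ case, $QD$ need not be symmetric, so one cannot directly read off a sign. The resolution is the two-step reduction above—first cancelling the skew part of $QD$ against the symmetry of $L$, then factoring $QD+DQ$ and invoking positive semidefiniteness of $L$—which together play the role of the integration-by-parts estimate in the PDE proof.
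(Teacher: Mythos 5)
Your proposal is correct and follows essentially the same route as the paper: the same Lyapunov function $\Phi(w)=\tfrac12 w^T(I_N\otimes Q)w$, the same mean-value-theorem estimate $R\le 2\mu\Phi(w)$ via Lemma~\ref{Lyapanov-inequality}, the same cancellation of the skew part of $QD$ against the symmetry of $L$, and the same factorization $QD+DQ=2M^TM$ combined with positive semidefiniteness of $L$. The only difference is notational---you work with explicit index sums where the paper uses Kronecker-product identities---so the two arguments are substantively identical.
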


Before proving the theorem, we recall that if $A=(a_{ij})$ is an $m\times n$ matrix and $B=(b_{ij})$ is a $p\times q$ matrix, then the Kronecker product, denoted by $A\otimes B$, is the $mp\times nq$ block matrix defined as follows:
\[A\otimes B\;:=\;\mbox{%
$\displaystyle\begin{bmatrix}a_{11}B & \ldots &a_{1n}B \\\vdots&\ddots&\vdots\\a_{m1}B & \ldots & a_{mn}B\end{bmatrix}$},\]
where $a_{ij}B$ denote the following $p\times q$ matrix:
\[a_{ij}B\;:=\;\mbox{%
$\displaystyle\begin{bmatrix}a_{ij}b_{11} & \ldots &a_{ij}b_{1q} \\\vdots&\ddots&\vdots\\a_{ij}b_{p1} & \ldots & a_{ij}b_{pq}\end{bmatrix}$}.\]

The following are some properties of Kronecker product: %(for more properties see e.g. \cite{Kronecker}):
\begin{enumerate}
\item $(A\otimes B)(C\otimes D)=(AC)\otimes (BD)$;
\item $(A\otimes B)^T=\;A^T\otimes B^T.$
\item Suppose that $A$ and $B$ are square matrices of size $n$ and $m$ respectively. Let $\lambda_1, \cdots, \lambda_n$ be the eigenvalues of $A$ and $\mu_1, \cdots, \mu_m$ be those of $B$ (listed according to multiplicity). Then the eigenvalues of $A\otimes B$ are $\lambda_i \mu_j$ for $i=1, \cdots, n$, and $j=1, \cdots, m$. 
\end{enumerate}

\newtheorem*{proofode}{Proof of Theorem \ref{discrete-weighted-L2}}
\begin{proofode} \footnote[2]{The techniques of the proof are similar to Theorem $4$, \cite{Arcak}}
By Lemma \ref{Lyapanov-inequality}, 
\begin{equation}\label{Lyap-ineq_discrete}
QJ_F+J_F^TQ\leq 2\mu Q,
\end{equation}
 where $Q=P^2.$ Let $u$, and $v$ be two solutions of (\ref{discrete}) and let $w:=u-v$. Define 
\[\Phi(w)\;:=\;\displaystyle\frac{1}{2}w^T(I_N\otimes Q)w.\]
Note that indeed 
\begin{subequations}
\begin{align*}\Phi(w) = \displaystyle\frac{1}{2}\sum_{i=1}^N{w^i}^TQ w^i 
=\displaystyle\frac{1}{2}\sum_{i=1}^N(Pw^i)^T (Pw^i)
=\displaystyle\frac{1}{2}\sum_{i=1}^N\|Pw^i\|_2^2
=\displaystyle\frac{1}{2}\|w\|_{2, P}^2
\end{align*}
\end{subequations}

Since $u$, and $v$ satisfy (\ref{discrete}), using the first property of Kronecker product listed above, we have:
\begin{subequations}\label{Phi-dot_discrete}
\begin{align*}
\displaystyle\frac{d\Phi}{dt}(w) &= w^T(I_N\otimes Q)(\tilde{F}(u, t)-\tilde{F}(v, t))-w^T(I_N\otimes Q)(L\otimes D) w\\
&= w^T(I_N\otimes Q)(\tilde{F}(u, t)-\tilde{F}(v, t))-w^T(L\otimes QD) w.
\end{align*}
\end{subequations}
Since $QD+DQ$ is a positive matrix, there exists a positive, symmetric matrix $M$ such that $QD+DQ=2M^TM$. Then using 
\[L\otimes (M^TM)\;=\;(I_N\otimes M^T)(L\otimes I_n)(I_N\otimes M),\] we get:
\begin{subequations}\label{eq1:proof of theorem_discrete}
\begin{align}
w^TL\otimes(QD+DQ) w &= 2w^T(I_N\otimes M^T)(L\otimes I_n)(I_N\otimes M)w\\
&= 2\left((I_N\otimes M)w\right)^T(L\otimes I_n) ((I_N\otimes M)w)\;\geq\;0.
\end{align}
\end{subequations}
The last inequality holds because all the eigenvalues of $L$, and therefore all the eigenvalues of $L\otimes I_n$, are non-negative, by the third property of Kronecker product listed above. Now because $L=L^T$, $DQ=(QD)^T$, using the second property of Kronecker product listed above, we get:
\begin{subequations}\label{eq2:proof of theorem_discrete}
\begin{align}
w^TL\otimes(QD-DQ) w &= w^T\left[(L\otimes QD)w\right]- w^T\left(L^T\otimes (QD)^T\right) w\\
&= w^T\left[(L\otimes QD)w\right]- w^T\left(L\otimes QD\right)^T w\\
&= w^T\left[(L\otimes QD)w\right]-\left[(L\otimes QD)w\right]^T w=0.
\end{align}
\end{subequations}
Considering the sum of Equations (\ref{eq1:proof of theorem_discrete}) and (\ref{eq2:proof of theorem_discrete}), we get:
\[-w^T(L\otimes QD) w\;\leq\; 0.\]
Now by Mean Value Theorem for integrals, and using (\ref{Lyap-ineq_discrete}), we rewrite the first term of the right hand side of (\ref{Phi-dot_discrete}) as follows:
 \[
 \begin{array}{lcl}
 w^T(I_N\otimes Q)(\tilde{F}(u, t)-\tilde{F}(v, t))&=&\displaystyle\sum_{i=1}^N{w^i}^TQ(F(u^i, t)-F(v^i, t)) w^i\\
 &=&\displaystyle\sum_{i=1}^N w_i^TQ\displaystyle\int_0^1J_F(v^i+sw^i, t)w^i\;ds\\
  &\leq&\mu\displaystyle\int_0^1\;ds\displaystyle\sum_{i=1}^N{w^i}^TQw^i\\
  &=&\mu w^T(I_N\otimes Q) w\\
  &=&2\mu\Phi(w).
 \end{array}
 \]
 Therefore \[\displaystyle\frac{d\Phi}{dt}(w)\;\leq\; 2\mu\Phi(w).\]
 This last inequality implies (\ref{contraction_discrete}).\qed
\end{proofode}

\end{document}